\newtheorem{theorem}{Theorem}[section]
\begin{document}
\title{Modified General Relativity and Quantum Theory in curved spacetime}
\author{Gary Nash
	\\University of Alberta, Edmonton, Alberta, Canada,T6G 2R3\\gnash@ualberta.net \footnote{Present address, Edmonton, Alberta}
	}
\date{~November 23, 2021}
\maketitle
\vspace{2mm}
\begin{abstract}With appropriate modifications, the multi-spin Klein-Gordon (KG) equation of quantum field theory can be adapted to curved spacetime for spins 0,1,1/2. The associated particles in the microworld then move as a wave at all spacetime coordinates. From the existence in a Lorentzian spacetime of a line element field $(X^{\beta},-X^{\beta}) $, the spin-1 KG equation $\nabla_{\mu}\nabla^{\mu}X^{\beta}=k^{2}X^{\beta} $ is derived from an action functional involving $X^{\beta}  $ and its covariant derivative. The spin-0 KG equation and the KG equation of the outer product of a spin-1/2 Dirac spinor and its Hermitian conjugate are then constructed. Thus, $ X^{\beta} $ acts as a fundamental quantum vector field. The symmetric part of the spin-1 KG equation, $ \tilde{\varPsi}_{\alpha\beta}$, is the Lie derivative of the metric. That links the multi-spin Klein-Gordon equation to Modified General Relativity (MGR) through its energy-momentum tensor of the gravitational field. From the invariance of the action functionals under the diffeomorphism group Diff(M), which is not restricted to the Lorentz group, $ \tilde{\varPsi}_{\alpha\beta}$ can instantaneously transmit information along $ X^{\beta} $. That establishes the concept of entanglement within a Lorentzian formalism.  The respective local/nonlocal characteristics of MGR and quantum theory no longer present an insurmountable problem to unify the theories.
	
\end{abstract}
Keywords\\
	{quantum mechanics; Klein-Gordon; general relativity; gravitational energy-momentum; quantum field theory; quantum gravity}
\vspace{2mm}
\newline Preprint of an article published in International Journal of Modern Physics A, Vol. 36, No. 29 (2021) 2150196 (26 pages) DOI: 10.1142/S0217751X21501967 ©copyright World Scientific Publishing Company https://www.worldscientific.com/worldscinet/ijmpa
\section{Introduction}
It has been nearly a century since Schr{\"o}dinger \cite{1} wrote down his equation describing non-relativistic quantum mechanics. In the same year of 1926, Klein \cite{2}, Gordon \cite{3} and Fock \cite{4} developed the relativistic quantum mechanical wave equation; mainly referred to as the Klein-Gordon (KG) equation, which was fundamental to the ensuing development of quantum field theory. Over a decade before that, Einstein \cite{5} developed General Relativity (GR) in 1915. And yet today, despite many subsequent successful developments in both quantum field theory and general relativity, there is still not a full understanding of the relationship between the two fundamental theories of physics. The quantization of gravity has been the major approach to unite the two theories, with string theory and loop quantum gravity the two mainstream proposals. However, those and other theories of quantum gravity have well-documented \cite{6,7} successes and failures. The opposite approach of gravitizing quantum mechanics attempts to bring quantum theory in line with the principles of general relativity as discussed in \cite{8,9} (and other references therein).\par  Rather than forcing quantum theory on general relativity, or vice versa, this article investigates if a natural unification of the theories can be established; unification in the sense that each theory contains the same entity, which links the theories together. Since GR is a local theory and quantum theory has nonlocal characteristics, a symmetric tensor that is common to both theories and solves the local/nonlocal issues must be introduced to achieve the goal of unifying them. That unification is sought from an extended theory of GR, and the covariant wave equations for a spin-1 boson and the parent spin-1/2 equation for a Dirac spinor, obtained from modifications of the Proca and Dirac equations. It is therefore imperative that the proper wave equation of each quantum field in curved spacetime is established before exploring the unification of GR with quantum theory, or proceeding with the quantization of that field. 
\par The relativistic KG wave equation for the quantum field $\Psi$ representing a free field with a particular spin in Minkowski spacetime is fundamental to the formulation of quantum field theory (QFT). In particular, the Klein-Gordon operator $ \partial_{\mu}\partial^{\mu}-k^{2} $, where $ k=\frac{m_{0}c}{\hbar} $ and $ m_{0} $ is the rest mass attributed to each particle of a given spin, plays a fundamental role \cite{10} in the quantization of the multi-spin quantum field $\Psi$. Similarly, a covariant wave equation for $ \Psi $ in curved spacetime is postulated. With appropriate modifications, it will be shown that the covariant wave equation
\begin{equation}\label{KG}
	\nabla_{\mu}\nabla^{\mu}\Psi=k^{2}\Psi, 
\end{equation}explicitly describes spins 0,1,2,1/2 in curved spacetime. The connection is the torsionless Levi-Civita connection for tensor fields, and for spinors, it contains the spinor affinities. For consistency with quantum field theory, we will refer to that equation as the KG equation, but with the understanding that it is the rudimentary covariant wave equation in curved spacetime with the field $ \Psi $ considered to represent spins 0,1,2,1/2. A particle in the quantum environment of the microworld described by that KG equation moves as a wave at all spacetime coordinates.\par Canonical quantization in curved spacetime is discussed in many references such as \cite{11,12}. In curved spacetime, there is no unique way of defining the positive and negative frequency modes and therefore no unique vacuum state. The concept of old-new states is introduced and if mixing of positive and negative frequency solutions occurs, then particles are created by the gravitational field. However, the concepts of vacuum and particles lose their meaning in curved spacetime. As stated in \cite{13}, although natural notions of vacuum state and particles can be defined for a free field in stationary spacetimes, no such notions exist in a general curved spacetime. It is not that particles cannot be defined at all in curved spacetime but rather that many definitions exist and none appears preferred. This difficulty is not present in the algebraic approach. Algebraic QFT \cite{13,14,15} can be formulated without requiring a preferred representation of the canonical commutation relations, and without requiring the definition of a preferred notion of particles. \par Canonical, algebraic and other quantization techniques apply to the asymmetric wave equation (\ref{KG}). However, it is emphasized that the proper wave equation for each quantum field must be established before proceeding with the quantization of that field. Thus, this article is concerned with the construction of the fundamental KG equation (\ref{KG}) for each of the stated spins and not with its subsequent quantization.
\par Curved spacetime is described on a time-oriented Lorentzian manifold with metric $(M,g)$.  On every non-compact paracompact Hausdorff manifold M a vector field $ X^{\beta} $ exists \cite{16} from which the smooth regular line element field $(X^{\beta},-X^{\beta}) $ is defined; it is like a vector field with an undetermined sign \cite{17}. \par The spin-1 KG equation of QFT can be derived from an action functional involving $ X^{\beta} $ and the covariant derivative of its covector $ \varPsi_{\alpha\beta}=\nabla_{\alpha}X_{\beta} $. The spin-0 KG equation is constructed from the scalar $ \Phi=\varPsi_{\alpha\beta}g^{\alpha\beta} $. It will be shown that the KG equation of the outer product of the spin-1/2 Dirac spinor and its Hermitian conjugate yields twice the spin-1 KG equation. Thus, the line element vector field plays the role of a fundamental quantum vector field. Quantum field theory for the noted spins is to be interpreted from the KG equation (\ref{KG}) expressed in terms of the line element field. It is the line element field that provides the extra freedom in \emph{both} MGR and quantum theory to enable the unification of the theories. \par  The left side of the spin-1 KG wave equation (\ref{KG}) can then be symmetrized. As GR involves symmetric tensors, the only possibility to associate GR directly with quantum theory in curved spacetime is through the symmetric part of the KG equation. In that regard, it is noteworthy that the Lie derivative of the metric along $ X^{\beta} $ \emph{is} the symmetric part of the spin-1 KG equation, $ \tilde{\varPsi}_{\alpha\beta} $. That immediately exhibits a geometrical property of quantum theory that has been neglected; quantum theory is incomplete. 
\par Einstein developed GR in a four-dimensional Riemannian spacetime with the understanding that spacetime is locally Minkowskian during free-fall. Recently, the natural extension of GR using the line element fields was developed in the article: Modified General Relativity (MGR) \cite{18}. By adopting Einstein's original postulate of a total energy-momentum tensor that includes the energy-momentum of the gravitational field, and using the Orthogonal Decomposition Theorem (Appendix A), a symmetric tensor $ \varPhi_{\alpha\beta} $ is introduced in MGR that describes the energy-momentum of the gravitational field and \emph{completes} GR. $ \varPhi_{\alpha\beta} $ is constructed from the Lie derivative along the line element vector of both the metric and a product of unit line element covectors. Hence, $ \tilde{\varPsi}_{\alpha\beta} $ is contained in both $ \varPhi_{\alpha\beta} $ and the KG equation (\ref{KG}) for spins 1,0 and an outer product of spin-1/2 Dirac spinors. The Lie derivative of the metric along the line element field connects MGR and quantum theory in curved spacetime. Moreover, it will be established that the Einstein equation intrinsically resides in the symmetric part of the spin-2 KG equation. \par The action functional for MGR and the KG action functional for spins 0,1 are invariant under the diffeomorphism group, Diff(M). $ \tilde{\varPsi}_{\alpha\beta} $ is constructed from the pullback of the metric under Diff(M), which is not restricted to the Lorentz group. Hence, the rate of change of the gravitational field along the regular vector $ X^{\beta} $ is not limited to the speed of light. The metric at a point on the integral curve associated with $ X^{\beta} $, far from a given point $ p $ on that curve, can be pulled back instantaneously to the neighbourhood of $ p $, or pushed forward from $ p $. The Lie derivative of the metric along the line element field vector, being the symmetric part of the spin-1 KG wave equation, opens the door to a discussion of the concept of entanglement within a Lorentz invariant formalism. Thus, it is important to develop the relationship of $ \tilde{\varPsi}_{\alpha\beta} $ to the spin-0 KG equation and the KG wave equation for a pair of Dirac spinors. That will expose the physical property of entanglement for spin-0 and spin-1 bosons, and a pair of spin-1/2 fermions.\par Some interesting results appear from the study of the spin-2 KG equation. Gravity, described by the metric, is a long-range \emph{effective} force. If gravitons are the exchange particles of gravity, they must be massless. In a four-dimensional time-oriented Lorentzian spacetime, it is shown that massless gravitons governed by the spin-2 KG equation cannot be described with the metric, and massless spin-2 ``particles" do not couple to a non-zero energy-momentum tensor. Therefore, massless gravitons do not act as force mediators of gravity. This result should not be considered controversial. It is well known that GR, as a classical field theory, does not require particle exchange to describe the effective force of gravity; that is nicely done by the curvature of spacetime. Moreover, there is nothing in the formalism of QFT that requires GR to be quantized; that was noted by Feynman, who said \cite{19} ``It is still possible that quantum theory does not absolutely guarantee that gravity \emph{has} to be quantized."  
\par The covariant multi-spin KG equation (\ref{KG}) is a pure wave equation in a curved spacetime quantum environment. Despite the fact that the process of measurement may constrain the wave behaviour of a particle, it does not permanently destroy that property of the particle when detected as such, and it does not suddenly become a wave when detected as a wave, because in both cases, the particle in the quantum environment of the microworld fundamentally moves as a wave at all spacetime coordinates in accordance with the covariant KG equation (\ref{KG}). That removes the apparent mystery in the wave-particle duality, which is a fundamental part of quantum theory. The apparent mystery is due to the physical process of measuring a particle in the microworld, which requires the statistical approach of quantum theory to describe the chaotic event. However, there is no wave-particle mystery in the microworld; the particles move as a wave at all possible spacetime coordinates.\par 
The KG equation for a spin-1 boson or a specific pair of spin-1/2 fermions, contains the symmetric tensor $ \tilde{\varPsi}_{\alpha\beta} $ in addition to the antisymmetric tensor $ K_{\alpha\beta} $, which traditionally describes the particles. $ K_{\alpha\beta} $ satisfies a wave equation that contains derivatives of the Lie derivative of the metric. The noted particles move as a wave at all spacetime coordinates and are guided by local changes in the gravitational field, which are significant at small distances. That property is similar in concept to the pilot wave of the de Broglie-Bohm theory \cite{20,21}, but it does not involve hidden variables and the mysterious guiding equation, which describes the pilot wave. \par Einstein declared quantum theory to be incomplete \cite{22} and introduced hidden variables into the theory in order to restore the concept of locality. However, Bell \cite{23} proved that it is impossible to construct a hidden variable theory that obeys locality and simultaneously reproduces all of the properties of quantum mechanics. Quantum theory is intrinsically nonlocal and hidden variables cannot be introduced to restore locality. \par  Nonlocality is a natural property of the symmetric part of the multi-spin KG equation (\ref{KG}). It is possible for a specific pair of spin-1/2 fermions to transmit quantum information instantaneously along the integral curve associated with the spin-1 quantum vector from which the Lie derivative of the metric is constructed. That establishes the concept of entanglement and the global nature of quantum theory. Thus, by using the line element field within MGR and introducing a symmetric tensor into quantum theory by symmetrizing the KG equation in curved spacetime, the respective local/nonlocal characteristics of MGR and quantum theory no longer present an insurmountable problem to unify the theories.  \par The structure of the paper is organized as follows: a partial review of MGR is presented in section 2 and section 3 explores the multi-spin covariant Klein-Gordon wave equation for spins 0,1,2,1/2. In section 4, spin-2 ``particles" are discussed, and the unification of the spin-1 fields with the Lorentzian metric is shown in section 5. The wave-particle duality, determinism and quantum entanglement are discussed in section 6, followed by the conclusion of section 7.
\section{Modified General Relativity}
Curved spacetime is described by the four-dimensional time-oriented Lorentzian manifold with metric, $ (M,g_{\alpha\beta}) $. The manifold describing spacetime in this paper is considered to be Hausdorff, connected, non-compact paracompact and smooth without boundary. These properties of the manifold lead to the structure of a Lorentzian metric. A smooth regular vector field $ X^{\alpha} $ exists on any non-compact paracompact Hausdorff manifold M  \cite{16,24}, from which the line element field $(\bm{X},-\bm{X}) $ is defined as an assignment of a pair of equal and opposite vectors at each point of M. It is like a vector field with an undetermined sign \cite{17}. A non-compact paracompact manifold admits a Lorentzian metric $ g_{\alpha\beta} $ if and only if it admits a line element field \cite{17}. A Lorentzian metric consists of a Riemannian metric $ g_{\alpha\beta}^{+} $, which always exists on $M$, and a unit covector $ u_{\alpha} $ that is collinear with one of the pair of vectors in the line element field \cite{16,17,24,25}:
\begin{equation}\label{gab}
	g_{\alpha\beta}=g_{\alpha\beta}^{+}-2u_{\alpha}u_{\beta},
\end{equation}which means the inverse metric is $ g^{\alpha\beta}=g^{+\alpha\beta}-2u^{\alpha}u^{\beta} $ with $ u_{\alpha}u^{\alpha}=1 $ using $ g^{+} $ and $u_{\alpha}u^{\alpha}=-1   $ using $ g $. The Riemannian metric $ g^{+} $ is independent of $ u $, which demands the Lorentzian metric $ g $ to be independent of $ X $; and conversely. \par Thus, a Lorentzian spacetime does not exist without a line element field. It is not enough to say a vector field exists on $ M $; a Lorentzian metric must have a directional attribute, which is what the line element field provides. MGR exploits one of the pair of vectors in the line element field to complete GR without introducing any additional vectors on the Lorentzian manifold. \par On any non-compact paracompact Hausdorff manifold there exists a Lorentz structure that does not generate closed timelike curves \cite{16}. The spacetime is assumed to admit a Cauchy surface and is therefore globally hyperbolic. That forbids the presence of closed causal curves \cite{25}. The connection on the manifold is torsionless and metric compatible.\par 
For completeness and reference, a partial review of MGR \cite{18} is now discussed. In MGR, Einstein's original postulate of a total energy-momentum
tensor $ T_{\alpha\beta} $ is adopted. It contains the matter energy-momentum tensor $ \tilde{T_{\alpha\beta}} $ and, since gravity gravitates, must admit a symmetric tensor that geometrically describes a dynamically changing metric to represent the energy-momentum of the gravitational field. That demands the local conservation of total energy-momentum with the vanishing of $\nabla^{\alpha}T_{\alpha\beta} $ and not $\nabla^{\alpha}\tilde{T}_{\alpha\beta}  $. Since $ \tilde{T}_{\alpha\beta} $ is not divergenceless, it can be set proportional to an arbitrary symmetric tensor $ w_{\alpha\beta} $, which is then orthogonally decomposed by the Orthogonal Decomposition Theorem (ODT); a revised version of which is included as Appendix A for reference in developing the spin-2 field equation. In addition to the Lovelock tensors, this introduces a new tensor, $ \varPhi_{\alpha\beta} $, which represents the energy-momentum of the gravitational field. It is constructed from the Lie derivatives along $ X^{\beta} $ of both the metric and a product of unit line element covectors:
\begin{equation}\label{phiab}
	\begin{split}
		\varPhi_{\alpha\beta}&=\frac{1}{2}\pounds_{X}g_{\alpha\beta}+\pounds_{X}u_{\alpha}u_{\beta}\\
		&=\frac{1}{2}(\nabla_{\alpha}X_{\beta}+\nabla_{\beta}X_{\alpha})+u^{\lambda}(u_{\alpha}\nabla_{\beta}X_{\lambda}+u_{\beta}\nabla_{\alpha}X_{\lambda}).
	\end{split}
\end{equation}
Since Lie derivatives are independent of the connection, they can be expressed in terms of covariant or partial derivatives. $\varPhi_{\alpha\beta}$ does not vanish when the connection coefficients vanish. Hence, the metric can be locally Minkowskian, as in free fall, without forcing $\varPhi_{\alpha\beta}$ to vanish. Local Lorentz invariance is preserved and there is no conflict with the equivalence principle. The radially symmetric static energy density of the gravitational field is calculated to be $-\frac{c^{4}}{8\pi G}\varPhi_{00}=-\frac{GM^{2}}{4\pi r^{4}}-\frac{c^{4}a_{0}^{2}}{16\pi G} $, which, since gravity gravitates,  is twice the Newtonian result and includes a contribution from the dark energy parameter $ a_{0} $. Gravitational energy-momentum is invariant under free-fall and can be localized. The total energy-momentum tensor $T_{\alpha\beta}=\tilde{T}_{\alpha\beta}-\frac{c^{4}}{8\pi G}\varPhi_{\alpha\beta}$ is divergenceless by virtue of the diffeomorphic invariance of the theory, and is therefore locally conserved.  
\par The action functional $ S=S^{F}+S^{EH}+S^{G} $ consists of the action for all matter fields $S^{F}$, the Einstein-Hilbert action of general relativity $S^{EH}$ and the action for the energy-momentum of the gravitational field, $S^{G}=-b\int\varPhi_{\alpha\beta} g^{\alpha\beta}\sqrt{-g}d^{4}x$:
\begin{equation}\label{Se}
	\begin{split}
		S=\int L^{F}( A^{\beta},\nabla^{\alpha} A^{\beta},...,g^{\alpha\beta})\sqrt{-g}d^{4}x
		+b \int (R-2\Lambda)\sqrt{-g}d^{4}x\\-b\int \varPhi_{\alpha\beta} g^{\alpha\beta}\sqrt{-g} d^{4}x
	\end{split}
\end{equation}where $L^{F}  $ is the Lagrangian for the matter fields. The variation of the action functional $S$ with respect to the inverse metric $g^{\alpha\beta}$ is
\begin{multline}
	\delta S =\int [-\frac{1}{2c}\tilde{T}_{\alpha\beta} +b(R_{\alpha\beta}-\frac{1}{2}g_{\alpha\beta}R)+b\Lambda g_{\alpha\beta}+b(\nabla_{\alpha}X_{\beta}
	+2u^{\lambda}u_{\beta}\nabla_{\alpha}X_{\lambda}\\
	+\nabla_{\mu}X_\nu(-u_{\alpha}u_{\beta} g^{\mu\nu}+u^{\mu}u^{\nu}g_{\alpha\beta})
	)]\delta g^{\alpha\beta}
	\sqrt{-g}\enspace d^{4}x\enspace+2b\int \nabla_{\alpha}(u^{\alpha}u^{\beta})\delta X_{\beta}\sqrt{-g}d^{4}x.
\end{multline} With $\delta S=0$ and arbitrary variations for $\delta g^{\alpha\beta}$ and $\delta X_{\beta}  $, it follows with $b=\frac{c^{3}}{16\pi G}$ that
\begin{equation}\label{MEQ}
	-\frac{8\pi G}{c^{4}}\tilde{T}_{\alpha\beta}+G_{\alpha\beta}+{\Lambda} g_{\alpha\beta}+\varPhi_{\alpha\beta}=0
\end{equation} is the modified Einstein equation where $ G_{\alpha\beta} $ is the Einstein tensor, and
\begin{equation}\label{nuab}
	\nabla_{\alpha}(u^{\alpha}u^{\beta})=0.
\end{equation}
The trace of $ \varPhi^{\alpha\beta} $ with respect to the metric, which represents the interaction of the gravitational field with its energy-momentum tensor, defines the scalar $\Phi=g_{\alpha\beta}\varPhi^{\alpha\beta}$. Using (\ref{nuab}), its integral over all spacetime vanishes:
\begin{equation}\label{intPhi}
	\int \Phi \sqrt{-g}d^{4}x=0.
\end{equation}Although the integral $ S^{G} $ vanishes, it admits the variation $ \delta S^{G}=b\int\varPhi_{\alpha\beta}\delta g^{\alpha\beta}\sqrt{-g}d^{4}x $ and from (\ref{intPhi}), the action
\begin{equation}\label{EHG}
	S^{EHG}=\frac{c^{3}}{16\pi G}\int (R-\Phi)\sqrt{-g}d^{4}x
\end{equation} generates the modified Einstein equation with no cosmological constant. If $ \Phi$ is set equal to 
$2\Lambda $, the Einstein equation with a cosmological constant is obtained accordingly, which contradicts (\ref{intPhi}). Thus, $ \Phi $ dynamically replaces the cosmological constant. $\varPhi_{\alpha\beta}$ completes the Einstein equation leaving it intact in form:
\begin{equation}\label{ME}
	G_{\alpha\beta}=\frac{8\pi G}{c^{4}}T_{\alpha\beta}.
\end{equation}
\par Varying the action $ S $ with respect to $ X^{\mu} $ determines the dynamical properties of the line element field:
\begin{equation}\label{Xvar}
	\frac{\delta S}{\delta X^{\mu}}=\int [2u^{\alpha}\varPhi_{\alpha\nu}+2u^{\alpha}u^{\lambda}(u_{\alpha}\nabla_{\nu}X_{\lambda}+u_{\nu}\nabla_{\alpha}X_{\lambda})-\Phi u_{\nu}]\frac{\delta u^{\nu}}{\delta X^{\mu}}\sqrt{-g}d^{4}x=0.	
\end{equation} That yields the Lorentz invariant expression
\begin{equation}\label{umu}
	u_{\nu}=\frac{3\partial_{\nu}f }{\Phi} 
\end{equation}
where $ f\neq0 $ is the magnitude of $ X^{\alpha} $ dual to $ X_{\nu}=fu_{\nu} $ by $ g^{\alpha\nu} $. Of the infinite number of possible covectors, only those given by (\ref{umu}) belong to the metric of MGR. It is important to note from (\ref{umu}) and (\ref{gab}) that $ \Phi $ is an intrinsic part of the spacetime metric. \par In general, $ \Phi $ cannot vanish because even weak gravitational fields gravitate. In fact, $\varPhi_{\alpha\beta}  $ vanishes if and only if $ X^{\mu} $ is a Killing vector. In an affine parameterization,  $\Phi=\frac{1}{2}(\nabla_{\alpha}X_{\beta}+\nabla_{\beta}X_{\alpha})g^{\alpha\beta} $, which vanishes if $ X^{\mu} $ is a Killing vector. Consequently, $\varPhi_{\alpha\beta}  $ must vanish because the inverse metric is non-degenerate, and conversely. However, in general, there are no Killing vector fields unless a particular symmetry is involved.
\section{The multi-spin covariant Klein-Gordon equation}In Minkowski spacetime, there exists a relativistic multi-spin Klein-Gordon equation for free fields \cite{10}. The commutation of partial derivatives allows the spin-1 field to be described by the KG wave equation, and the KG wave equation for a Dirac spinor is the parent equation of the two Dirac equations. In curved spacetime, covariant derivatives do not commute. The Proca spin-1 equation is no longer a pure wave equation, and the product of the two operators in the Dirac equations acting on a Dirac spinor is not equivalent to the wave equation for a Dirac spinor. However, with appropriate modifications, it is possible to develop the covariant wave equation (\ref{KG}) for spins 0,1,1/2 that is crucial for the unification of quantum field theory and MGR.
\par  To achieve that goal, the (0,2) tensor
\begin{equation}\label{Psi}
	\varPsi_{\alpha\beta}:=\nabla_{\alpha}X_{\beta}
\end{equation}is required. It can be symmetrized according to
\begin{equation}\label{Sym}
	\begin{aligned}
		\varPsi_{\alpha\beta} &=\frac{1}{2}(\nabla_{\alpha}X_{\beta}
		+\nabla_{\beta}X_{\alpha})+\frac{1}{2}(\nabla_{\alpha}X_{\beta} -\nabla_{\beta}X_{\alpha})\\
		&:=\frac{1}{2}\tilde{\varPsi}_{\alpha\beta}+\frac{1}{2}K_{\alpha\beta}.
	\end{aligned}
\end{equation} There are now some subtle points to discuss. First, the symmetric tensor $\tilde{\varPsi}_{\alpha\beta}$ is the Lie derivative of the metric along $ X^{\beta}$: $ \pounds_{X}g_{\alpha\beta}=\nabla_{\alpha}X_{\beta}+\nabla_{\beta}X_{\alpha} $. Second, the Lorentz constraint
\begin{equation}\label{LC}
	\nabla_{\alpha}X^{\alpha}=0
\end{equation}cannot be invoked for spins 1,0 because it would force $\tilde{\varPsi}_{\alpha\beta}  $ to vanish which  cannot be true in general. Even weak gravitational fields gravitate and $X^{\beta}  $ is not a Killing vector. It follows from (\ref{phiab}) that $ \Phi $ is given by
\begin{equation}\label{PhiNX}
	\Phi=\nabla_{\alpha}X^{\alpha}\neq0
\end{equation} in an affine parameterization where the geodesic term  $ 2u_{\lambda}u^{\alpha}\nabla_{\alpha}X^{\lambda} $ vanishes. $ \Phi $ replaces the Lorentz constraint. Third, the additional divergenceless constraint
\begin{equation}\label{DPsi}
	\nabla_{\alpha}\tilde{\varPsi}^{\alpha\beta}=0
\end{equation} cannot apply to the KG spin-1 and spin-1/2 equations as shown below.\par The spin-1 KG equation is now derived directly from an action functional involving $ X^{\beta} $ and its covariant derivative. That provides the fundamental relationship of the line element field to the spin-1 KG equation and supports why $X^{\beta}  $ plays the role of a fundamental quantum vector field.
\subsection{Derivation of the spin-1 Klein-Gordon equation from the line element field}
The action functional is
\begin{equation}\label{key}
	S^{1}=-\frac{1}{2}\int[\nabla_{\sigma}X_{\varrho}\nabla^{\sigma}X^{\varrho}+k^{2}X_{\sigma}X^{\sigma}]\sqrt{-g}d^{4}x.
\end{equation}Using (\ref{gab}), the variation of $ S^{1} $ with respect to $ X^{\nu} $ yields
\begin{equation}\label{deltaS}
	\begin{split}
		\int[\Box X_{\nu}-k^{2}X_{\nu}+[k^{2}(X^{\sigma}X_{\sigma}u^{\kappa}-2X^{\sigma}u_{\sigma}X^{\kappa})+X^{\varrho}u_{\varrho}\Box X^{\kappa}+X^{\kappa}u_{\beta}\Box X^{\beta}-X_{\varrho}u^{\kappa}\Box X^{\varrho}]\\\frac{\delta u_{\kappa}}{\delta X_{\nu}}]\delta X_{\nu}\sqrt{-g}d^{4}x
		+\int[u^{\kappa}\nabla_{\sigma}(X_{\varrho}\nabla^{\sigma}X^{\varrho})+2u_{\varrho}(\nabla^{\alpha}X^{\varrho})\nabla_{\alpha}X^{\kappa}]\frac{\delta u_{\kappa}}{\delta X_{\nu}}\delta X_{\nu}\sqrt{-g}d^{4}x\\
		-\frac{1}{2}\int\nabla_{\varrho}(X^{\beta}\nabla^{\alpha}X^{\varrho}-X^{\varrho}\nabla^{\alpha}X^{\beta})\frac{\delta g_{\alpha\beta}}{\delta u_{\kappa}}\frac{\delta u_{\kappa}}{\delta X_{\nu}}\delta X_{\nu}\sqrt{-g}d^{4}x
	\end{split}
\end{equation}after calculating $ \delta \Gamma^{\lambda}_{\alpha\beta} $ induced by variations of $ u_{\kappa} $ in the metric and defining $ \Box :=\nabla_{\alpha}\nabla^{\alpha} $. The last tensor term of (\ref{deltaS}) vanishes, which follows from choosing a basis $(e_{\alpha})$ at a point $p\in M$ orthonormal with respect to g and with $ e_{0}=u $. Then $ u^{0}=1, u_{0}=-1, u^{j}=u_{j}=0 $ for $ j=1,2,3 $, $g_{\alpha\beta}=\eta_{\alpha\beta}$ and $ X^{0}=fu^{0} $, $X^{j}=0$ where $ f $ is the magnitude of $ X^{\beta} $. After setting the variation of $ S^{1} $ to zero, it follows that the spin-1 KG equation 
\begin{equation}\label{}
	\nabla_{\alpha}\nabla^{\alpha}X^{\nu}=k^{2}X^{\nu}
\end{equation} is a solution to the variational equation obtained from (\ref{deltaS}) provided the line element fields satisfy the constraint
\begin{equation}\label{Xcons}
	u^{\kappa}\nabla_{\alpha}(X_{\varrho}\nabla^{\alpha}X^{\varrho})+2u_{\varrho}(\nabla^{\alpha}X^{\varrho})\nabla_{\alpha}X^{\kappa}=0,
\end{equation}which is the inhomogeneous wave equation 
\begin{equation}\label{f}
	\Box f=-\frac{\partial_{\alpha}f}{f}(\partial^{\alpha}f+2\partial^{\alpha}f_{(\kappa)})
\end{equation}
where $f\neq0 $ is the magnitude of vector $ X^{\varrho} $ and $ f_{(\kappa)}\neq0 $ the magnitude of $ X^{\kappa} $.\par The fundamental goal of this article: to find a natural connection between quantum theory and MGR, is now attainable. Once it is established that the Lie derivative of the metric in a Lorentzian spacetime, $\tilde{\varPsi}_{\alpha\beta}$, is a fundamental part of the multi-spin KG equation for a given spin, it directly connects MGR to quantum field theory by the relation
\begin{equation}\label{GRKG}
	\varPhi_{\alpha\beta}=\frac{1}{2}\tilde{\varPsi}_{\alpha\beta}+u^{\lambda}(u_{\alpha}\nabla_{\beta}X_{\lambda}+u_{\beta}\nabla_{\alpha}X_{\lambda}).
\end{equation}\par It is now shown that the Lie derivative of the metric is rudimentary to the multi-spin KG equation for spins 0,1 and an outer product of spin 1/2 spinors, by examining each spin.
\subsection{Spin-0 Klein-Gordon equation}
When $\Psi$ is the scalar field $ \Phi $, the familiar spin-0 KG equation is 
\begin{equation}\label{KG0}
	\nabla_{\mu}\nabla^{\mu}\Phi=k^{2}\Phi.
\end{equation}If $ \Phi $ is constructed from the line element vector field 
\begin{equation}\label{phi}
	\Phi:=\nabla_{\alpha}X_{\beta}g^{\alpha\beta}
\end{equation}it follows that 
\begin{equation}\label{spin0}
	\nabla_{\mu}\nabla^{\mu}\Phi=\frac{1}{2}g^{\alpha\beta}\nabla_{\mu}\nabla^{\mu}\tilde{\varPsi}_{\alpha\beta},
\end{equation} which exhibits the dependence on the Lie derivative of the metric. 
\subsection{Spin-1 KG equation}
The traditional Proca equation \cite{26,27} in curved spacetime
\begin{equation}\label{key}
	\nabla_{\alpha}K^{\alpha\beta}=k^{2}X^{\beta}
\end{equation} with the Lorentz constraint, completely describe a neutral spin-1 boson for $ k\neq0 $, and the photon for $ k=0 $. However, this equation forces the spin-1 wave equation to be expressed as $\nabla_{\alpha}\nabla^{\alpha}X^{\beta}=k^{2}X^{\beta}+\nabla_{\alpha}\nabla^{\beta}X^{\alpha}$, which violates the fundamental postulate that the spin-1 KG equation shall be a pure wave equation of the form (\ref{KG}) when $ \Psi $ is the vector field $ X^{\beta} $:
\begin{equation}\label{KG1}
	\nabla_{\alpha}\nabla^{\alpha}X^{\beta}=k^{2}X^{\beta}.
\end{equation}
\par This problem is automatically resolved with the inclusion of the symmetric part of the KG equation. The equivalent form of (\ref{KG1}) obtained from (\ref{Sym}) is
\begin{equation}\label{P2}
	\nabla_{\alpha}(\tilde{\varPsi}^{\alpha\beta}+K^{\alpha\beta})=2k^{2}X^{\beta}.
\end{equation} If $\tilde{\varPsi}^{\alpha\beta}  $ is divergenceless, then (\ref{P2}) generates the Lorentz constraint (\ref{LC}) 
\begin{equation}\label{key}
	\begin{split}
		\nabla_{\beta}\nabla_{\alpha}K^{\alpha\beta}=-\frac{1}{2}[\nabla_{\alpha},\nabla_{\beta}]K^{\alpha\beta}=-R_{\alpha\beta}K^{\alpha\beta}=2k^{2}\nabla_{\beta}X^{\beta}=0
	\end{split}
\end{equation}
because $R_{\alpha\beta}  $ and $ K^{\alpha\beta} $ have opposite symmetries. The Lorentz constraint forces $ \tilde{\varPsi}_{\alpha\beta} $ to vanish because $ g^{\alpha\beta} $ is non-degenerate. Thus, (\ref{DPsi}) cannot hold for the spin-1 field. The spin-1 KG equation in curved spacetime contains \emph{both} symmetric and antisymmetric components with the symmetric part being the Lie derivative of the metric. $ \Phi=\nabla_{\alpha}X^{\alpha}\neq0 $ is an intrinsic part of the spin-1 field, which replaces the Lorentz constraint and allows the spin-1 KG equation to be expressed as (\ref{KG1}).
\subsection{Spin-1/2 Dirac spinors and the spin-1 Klein-Gordon equation}  With $ \Psi $ a first rank four-component Dirac spinor $ \varPsi^{a} $, the spin-1/2 KG wave equation in curved spacetime is 
\begin{equation}\label{KG12}
	\nabla_{\alpha}\nabla^{\alpha}\varPsi^{a}=k^{2}\varPsi^{a}
\end{equation}where $a=1,2,3,4$ and the covariant derivative contains the spinor affinities. That is the parent equation of the Dirac equations
\begin{equation}\label{Dirac}
	(\gamma^{\mu}\nabla_{\mu}+k)\varPsi^{a}=0, \enspace (\gamma^{\nu}\nabla_{\nu}-k)\varPsi^{a}=0 
\end{equation}in curved spacetime provided each Dirac equation is modified as shown below. The relationship between Dirac spinors and the Lie derivative of the metric along the line element vector must now be established. To accomplish that, some spinor analysis is presented for clarity. \par 
It is well known \cite{28,29} that the second rank spinors $\varphi^{A\dot{B}}$ and $\varphi_{A\dot{B}}$ can be expressed in terms of the associated vector $X^{\beta}$ and covector $X_{\beta}$ as  \begin{equation}\label{CC}
	\varphi^{A\dot{B}}=\sigma^{A\dot{B}}_{\beta}X^{\beta}
\end{equation} and
\begin{equation}\label{CCI}
	\varphi_{A\dot{B}}=\sigma_{A\dot{B}}^{\beta}X_{\beta}.
\end{equation}
The Hermitian connecting quantities $\sigma^{A\dot{B}}_{\beta}$ transform as a spacetime vector on the index $\beta$ and as spinors on the index $A=1,2$ and conjugate index $\dot{B}=1,2$. Covariant derivatives of spinors are introduced in the same formalism as that for tensors by adopting the spinor affinities $\Gamma^{A}_{\alpha B}$ and defining\begin{equation}
	\nabla_{\alpha}\varPsi_{A}=\partial_{\alpha}\varPsi_{A}-\Gamma^{B}_{\alpha A}\varPsi_{B},\enspace  \nabla_{\alpha}\varPsi^{A}=\partial_{\alpha} \Psi^{A}+\Gamma^{A}_{\alpha B}\varPsi^{B}
\end{equation} for the spinors $\varPsi_{A}$ and $\varPsi^{A}$, respectively.
The covariant derivative of a mixed index spinor-tensor is defined as
\begin{equation}
	\nabla_{\alpha}\varPsi^{\beta A}=\partial_{\alpha}\varPsi^{\beta A}+\Gamma^{\beta}_{\alpha\kappa}\varPsi^{\kappa A}+\Gamma^{A}_{\alpha B}\varPsi^{\beta B}
\end{equation} and the covariant derivative of the connection quantities vanishes
\begin{equation}
	\nabla_{\kappa}\sigma^{\alpha}_{A\dot{B}}=0.
\end{equation}  \par Using these relationships between tensors and spinors, we can now proceed to obtain the Lie derivative of the metric along the line element vector from a KG equation involving Dirac spinors. Let $ \varPsi^{a}_{\dot{a}}:=\varPsi^{a}\otimes\varPsi^{a\dagger} $ represent the second rank spinor formed from the outer product of the Dirac spinor and its Hermitian conjugate (complex transpose). It satisfies the wave equation
\begin{equation}\label{KGC12}
	\nabla_{\alpha}\nabla^{\alpha}\varPsi^{a}_{\dot{a}}={k^{\prime}}^{2}\varPsi^{a}_{\dot{a}}
\end{equation}provided $k^{\prime}=\sqrt{2}k  $ which demands the constraint 
\begin{equation*}\label{key}
	\nabla_{\alpha}\varPsi^{a}\nabla^{\alpha}\varPsi^{a\dagger}=0. 
\end{equation*} $\varPsi^{a}_{\dot{a}} $ is a $4\times4$ matrix which can be represented by four $ 2\times2 $ blocks. The four blocks are second rank spinors identified as 
\begin{equation}\label{Stype}
	\varPsi^{A}_{\dot{A}},\; \varPsi_{A}^{\dot{A}},\; \varPsi^{A\dot{A}}\; \text{and}\; \varPsi_{A\dot{A}} .
\end{equation}
Each of these spinors must satisfy the wave equation. In particular 
\begin{equation}\label{KGB2}
	\nabla_{\alpha}\nabla^{\alpha}\varPsi^{A\dot{A}}={k^{\prime}}^{2}\varPsi^{A\dot{A}}.
\end{equation}Using (\ref{CC}), that is equivalent to
\begin{equation}
	\sigma^{A\dot{A}}_{\beta}\nabla_{\alpha}\nabla^{\alpha}X^{\beta}={k^{\prime}}^{2}\varPsi^{A\dot{A}}
\end{equation} which can be rewritten in terms of $\varPsi^{\alpha\beta}=\nabla^{\alpha}X^{\beta}$ and symmetrized as in (\ref{Sym}) to give\begin{equation}\label{KG122}
	\sigma^{A\dot{A}}_{\beta}\nabla_{\alpha}(\tilde{\varPsi}^{\alpha\beta}+K^{\alpha\beta})=2{k^{\prime}}^{2}\sigma^{A\dot{A}}_{\beta}X^{\beta}.
\end{equation} The three other blocks described in (\ref{Stype}) will lead to (\ref{KG122}) with different connection quantities but with the same vector $ X^{\beta} $. The connection quantities are non-degenerate so
\begin{equation}\label{key}
	\mid\sigma^{A\dot{A}}_{\beta}\mid\mid\nabla_{\alpha}(\tilde{\varPsi}^{\alpha\beta}+K^{\alpha\beta})-2{k^{\prime}}^{2}X^{\beta}\mid=0
\end{equation}requires
\begin{equation}\label{KG1P}
	\nabla_{\alpha}(\tilde{\varPsi}^{\alpha\beta}+K^{\alpha\beta})=2{k^{\prime}}^{2}X^{\beta},
\end{equation}which is the spin-1 equation (\ref{P2}) with the inverse Compton wavelength $ k^{\prime} $. The converse also applies: given (\ref{KG1P}), it follows that $ \varPsi^{a}_{\dot{a}} $ satisfies the wave equation (\ref{KGC12}) with $k^{\prime}=\sqrt{2}k$ and the stated constraint. From the spin-1 situation, the divergenceless condition (\ref{DPsi}) cannot be employed because it would require $\tilde{\varPsi}_{\alpha\beta}  $ to vanish. Thus, the KG equation for $ \varPsi^{a}_{\dot{a}} $ contains the Lie derivative of the metric along $ X^{\beta} $. That two related spin-1/2 spinors must encompass the Lie derivative of the metric is important in the discussion of entanglement in section 6.  \par Further study of the spin-1/2 KG equation reveals that the Ricci scalar is problematic; the product of the Dirac factorizations contains a $ \frac{R}{4} $ term that prevents the parent spin-1/2 KG equation from being a pure wave equation of the form (\ref{KG}). That requires an adjustment to the Dirac equations in curved spacetime so that the product of their operators yields the spin-1/2 KG wave equation. 
\par The gamma matrices $\gamma^{\mu}$ in curved spacetime are assumed to satisfy the anticommutation relation
\begin{equation}\label{gamma}
	\{\gamma^{\mu},\gamma^{\nu}\}=2g^{\mu\nu}.
\end{equation}  Using 
\begin{equation}
	\nabla_{\mu}\gamma^{\nu}=0,
\end{equation} the product of the Dirac factorizations 
\begin{equation}
	(\gamma^{\mu}\nabla_{\mu}+k)(\gamma^{\nu}\nabla_{\nu}-k)\varPsi^{a}=0
\end{equation}yields
\begin{equation}\label{gamma2}
	\gamma^{\mu}\gamma^{\nu}\nabla_{\mu}\nabla_{\nu}\varPsi^{a}=k^{2}\varPsi^{a},
\end{equation}which is expressed in the literature \cite{11,30} (restated here using a +2 signature metric)  as
\begin{equation}
	\nabla_{\mu}\nabla^{\mu}\varPsi^{a}=(k^{2}+\frac{1}{4}R)\varPsi^{a}. 
\end{equation}With only the algebra of (\ref{gamma}), the spin-$1/2$ KG equation in curved spacetime is not precisely recoverable due to the additional $\frac{1}{4}R$ term. That term can be eliminated by inserting the complex scalar $ \Omega $ into (\ref{Dirac}) to obtain the modified Dirac equations 
\begin{equation}\label{key}
	(\gamma^{\mu}\nabla_{\mu}+\Omega+k)\varPsi^{a}=0, \enspace (\gamma^{\nu}\nabla_{\nu}+\Omega-k)\varPsi^{a}=0 
\end{equation}where 
\begin{equation}\label{key}
	\Omega^{2}=\frac{R}{4}.
\end{equation}
By defining the algebra
\begin{equation}\label{key}
	\{\gamma^{\mu}\nabla_{\mu},\Omega\}=0
\end{equation} 
the product of the operators of the modified Dirac equations in curved spacetime yields their parent spin-1/2 KG equation (\ref{KG12}).\par Thus, by symmetrizing $ \nabla_{\alpha}X_{\beta} $, the Proca equation is replaced with a pure wave equation. Adding a complex scalar to the Dirac equations allowed the product of the  Dirac factorizations to yield the parent spin 1/2 wave equation. The KG equation (\ref{KG}) has been established as the proper covariant wave equation for spins 0,1,1/2 in curved spacetime. Moreover, the Lie derivative of the metric along the line element field connects MGR to the KG equation of quantum field theory for spins 1,0 and a pair of related spin-1/2 Dirac spinors. A natural unification of gravity and quantum field theory has been established. 
\subsection{Spin-2 Klein-Gordon equation}
The KG equation for the symmetric spin-2 field $ \tilde{\chi}_{\alpha\beta} $ is 
\begin{equation}\label{S2}
	\nabla_{\mu}\nabla^{\mu}\tilde{\chi}_{\alpha\beta}=k^{2}\tilde{\chi}_{\alpha\beta}.
\end{equation}$ \tilde{\chi}_{\alpha\beta} $ is constructed from the symmetrization of the general (0,2) tensor field $ \chi_{\alpha\beta} $ according to 
\begin{equation}\label{Sym2}
	\begin{split}
		\chi_{\alpha\beta}&=\frac{1}{2}(\chi_{\alpha\beta}+\chi_{\beta\alpha})+\frac{1}{2}(\chi_{\alpha\beta}-\chi_{\beta\alpha})\\
		&:=\frac{1}{2}\tilde{\chi}_{\alpha\beta}+\frac{1}{2}C_{\alpha\beta}.
	\end{split}
\end{equation} The symmetric spin-2 field must be divergenceless and traceless with respect to the metric; it has 5 degrees of freedom if $ k\neq0 $. Hence, $ \tilde{\chi}_{\alpha\beta} $ cannot be equivalent to $ \tilde{\varPsi}_{\alpha\beta} $ because the traceless attribute of the spin-2 field would lead to the trivial solution of $ \tilde{\varPsi}_{\alpha\beta} $. Gravitons are the particles associated with the metric. Hence, the symmetric spin-2 field must involve the metric as a field variable to enable gravitons to be described by the spin-2 KG equation. 
\par An expression for $ \tilde{\chi}_{\alpha\beta} $ can be obtained from the Orthogonal Decomposition Theorem in Appendix A, which involves a linear sum of (0,2) divergenceless tensors as one part of the orthogonal splitting of an arbitrary (0,2) symmetric tensor. That collection of divergenceless tensors can be defined to consist of the union of the set of Lovelock tensors and a set of symmetric divergenceless tensors, which are not Lovelock tensors. The collection of non-Lovelock tensors is represented by $ h_{\alpha\beta} $. They are independent of both the metric and the Einstein tensor, which are the Lovelock tensors in a four-dimensional spacetime. \par The general symmetric tensor $ \tilde{\chi}_{\alpha\beta} $ must be a linear combination of the matter energy-momentum tensor and an unknown symmetric tensor $ w_{\alpha\beta} $;
\begin{equation}\label{key}
	\tilde{\chi}_{\alpha\beta}=a\tilde{T}_{\alpha\beta}+bw_{\alpha\beta}
\end{equation}where $ a $ and $ b $ are arbitrary constants. $ w_{\alpha\beta} $ can then be orthogonally decomposed into $\varPhi_{\alpha\beta}  $, the Lovelock tensors and the non-Lovelock tensors:
\begin{equation}\label{key}
	\tilde{\chi}_{\alpha\beta}=a\tilde{T}_{\alpha\beta}+b(G_{\alpha\beta}+\Lambda g_{\alpha\beta}+\varPhi_{\alpha\beta}+h_{\alpha\beta}).
\end{equation} By requiring both $ \tilde{\chi}_{\alpha\beta} $ and $ h_{\alpha\beta} $ to be traceless, equation (\ref{MEQ}) is recovered for a non-degenerate inverse metric by setting $ a=-1 $ and $ b=\frac{c^{4}}{8\pi G} $. It follows that 
\begin{equation}\label{Psi2}
	\tilde{\chi}_{\alpha\beta}=\frac{c^{4}}{8\pi G}h_{\alpha\beta}.
\end{equation}$h_{\alpha\beta} $ has dimensions of $ L^{-2} $. That eliminates the super-energy divergenceless tensors such as the trace of the Chevreton tensor and the Bach tensor, which have dimension $ L^{-4} $. \par Thus, a general spin-2 symmetric tensor that satisfies the symmetric spin-2 KG equation contains the Einstein equation; it is hidden in the spin-2 field $\tilde{\chi}_{\alpha\beta} $. The metric itself explicitly belongs to the Einstein equation as one of the Lovelock tensors. The metric is stripped-out of $\tilde{\chi}_{\alpha\beta}  $ and is not available as a field variable in the spin-2 KG equation.
\par Some consequences of this result are now discussed.
\section{Spin-2 ``particles"}
Gravitons are considered to be massless particles because of the $\frac{1}{r^{2}}$ long-range effective force behaviour of gravity. They have spin-2 so that they can couple to the energy-momentum tensor. However, in a four-dimensional spacetime when the mass vanishes in (\ref{S2}), 
\begin{equation}\label{habwave}
	\nabla_{\mu}\nabla^{\mu}h_{\alpha\beta}=0
\end{equation} is the equation describing a massless spin-2 ``particle". Because $h_{\alpha\beta}$ is independent of the metric, this equation cannot describe a massless spin-2 graviton. Moreover, if the mass is nonzero, it follows from (\ref{S2}) and (\ref{Psi2}) that 
\begin{equation}\label{key}
	\nabla_{\mu}\nabla^{\mu}h_{\alpha\beta}=k^{2}h_{\alpha\beta}	
\end{equation}so neither massless nor massive gravitons exist.\par Furthermore, spin-2 ``particles" cannot couple to a non-zero energy-momentum tensor as force mediators for gravity. The action functional $ S^{H} $ that generates (\ref{S2}) from the variation with respect to $ h_{\alpha\beta} $ is
\begin{equation}\label{key}
	S^{H}=-\frac{1}{2}\nabla_{\mu}h_{\rho\sigma}\nabla^{\mu}h^{\rho\sigma}-\frac{1}{2}k^{2}h_{\rho\sigma}h^{\rho\sigma},
\end{equation} and the interaction of the matter energy-momentum tensor with $h_{\rho\sigma}$ is
\begin{equation}
	\begin{split}
		S_{h}^{int} &= -\frac{1}{2c} \int \tilde{T}^{\rho\sigma}h_{\rho\sigma} \sqrt{-g}d^{4}x
		\\&=-\frac{c^{3}}{16\pi G} \int (G^{\rho\sigma}+\varPhi^{\rho\sigma})h_{\rho\sigma}\sqrt{-g}d^{4}x.
	\end{split}
\end{equation} The variation  of $ S_{h}^{int}$ with respect to $ h_{\alpha\beta} $ must vanish; otherwise $ S^{H}+S_{h}^{int}$ would not generate the spin-2 wave equation (\ref{S2}). Since both $\varPhi^{\rho\sigma}$ and $G^{\rho\sigma}  $ are independent of $ h_{\rho\sigma} $, the variation of $ S_{h}^{int}$ with respect to $ h_{\alpha\beta} $ requires $ G^{\alpha\beta}+\varPhi^{\alpha\beta}=0 $. It follows that $ \tilde{T}^{\alpha\beta} $ vanishes and there is no coupling to the matter energy-momentum tensor. Spin-2 ``particles" do not couple to any type of matter but can occupy the vacuum in accordance with $G^{\alpha\beta}+\varPhi^{\alpha\beta}=0 $. Gravity, unlike the other three known forces in nature, does not require the exchange of particles to describe its long-range force behavior.   
\subsection{The hierarchy problem}
The hierarchy problem of particle physics can be stated as the question: why is the force of gravity much weaker than the other three known forces in nature?  In the case of electrodynamics, if both gravity and electrodynamics have long-range massless force mediators, why is the electromagnetic force $ 10^{40} $ times stronger than that of gravity?  The electroweak force is $ 10^{24} $ times stronger than gravity. Moreover, as the name suggests, the strong nuclear force presents the largest disparity to gravity at nuclear dimensions. \par At the root of this problem is the notion that gravity \emph{must} be quantized. However, the symmetric spin-2 KG equation in a 4-dimensional Lorentzian spacetime excludes gravitons as force mediators of gravity. That starkly contrasts with the spin-1 KG equation for a massless photon, which mediates the electromagnetic field. Similarly, the massive spin-1 W and Z bosons mediate the electroweak force and the spin-1 massless gluons mediate the strong nuclear force. Gravity has neither massless nor massive particles that act as force mediators. Thus, the hierarchy problem is explained without the need of extra spatial dimensions inherent in string theory. 
\section{The spin-1 field and gravity}
The symmetrization of the asymmetric tensor $\varPsi_{\alpha\beta}  $ into the Lie derivative of the metric and the Faraday tensor for electrodynamics when $ k=0 $, is somewhat similar to the approach that Einstein presented \cite{31,32,33} to unify gravity and the electromagnetic field. He generalized the Riemannian metric as an asymmetric tensor with the symmetric part representing the gravitational field in a Riemannian spacetime, and the antisymmetric component with six remaining degrees of freedom describing the electromagnetic field. However, one major problem with his theory was that it did not encompass quantum theory; whereas $\varPsi_{\alpha\beta}  $ comes from quantum theory and while $\tilde{\varPsi}_{\alpha\beta}  $ is not the metric, it is the Lie derivative of the metric along the quantum vector.\par From the structure of the metric in equation (\ref{gab}), the spin-1 field is an integral part of the Lorentzian metric and therefore the gravitational field. That follows from the scale invariance of the spin-1 KG equation. Consider 
\begin{equation}\label{KGl}
	\nabla_{\alpha}\nabla^{\alpha}(\lambda X^{\beta})=k^{2}\lambda X^{\beta},
\end{equation}where $ \lambda>0 $ is an arbitrary scalar, $ X^{\beta}=fu^{\beta} $, $ X_{\beta}=fu_{\beta}  $ and $ f\neq0 $. Provided $ \Box \lambda=-\frac{2}{f}\partial_{\alpha}f\partial^{\alpha}\lambda$, (\ref{KGl}) is independent of $ \lambda $. Choosing $ \lambda=\frac{1}{f} $ requires 
\begin{equation}\label{f0}
	\Box f=0  
\end{equation}
and the unit vectors collinear with the line element field satisfy the spin-1 KG equation
\begin{equation}\label{KG1u}
	\nabla_{\alpha}\nabla^{\alpha}u_{\beta}=k^{2}u_{\beta}.
\end{equation}The constraint (\ref{f0}) is compatible with (\ref{f}) because it is possible to choose $ f_{(\kappa)}=\frac{C-f}{2}$ where $ C $ is an arbitrary constant.
Thus, gravity and the spin-1 fields are automatically unified in a Lorentzian spacetime if the magnitude of the spin-1 fields satisfies the homogeneous wave equation (\ref{f0}). The spin-1 fields contribute to the gravitational field in a Lorentzian spacetime.
\section{Quantum entanglement, determinism and the wave-particle duality}
The outer product $\varPsi^{a}\otimes\varPsi^{a\dagger} $ in (\ref{KGC12}) is fundamental to the existence of entangled spin-1/2 fermion states. Each of the four 2$\times$2 blocks of $ \varPsi^{a}_{\dot{a}} $ can be expanded as a sum of products in terms of basis spinors. For example, if $ \varphi^{A}_{i} $ and  $ \varphi^{\dot{A}}_{j} $ are basis spinors for $ \varPsi^{A} $ and $ \varPsi^{\dot{A}} $ in the block $ \varPsi^{A\dot{A}} $, then $ \varPsi^{A}=\displaystyle\sum_{i=1}^{2}a_{i}^{A}\varphi^{A}_{i} $, $ \varPsi^{\dot{A}}= \displaystyle\sum_{j=1}^{2}b_{j}^{\dot{A}}\varphi^{\dot{A}}_{j} $ and $ \varPsi^{A\dot{A}}=\displaystyle\sum_{i,j=1}^{2}a^{A\dot{A}}_{ij}\varphi^{A}_{i}\otimes\varphi^{\dot{A}}_{j} $. Any term in the sum with coefficients \emph{not} of the form $a_{ij}^{A\dot{A}}=a_{i}^{A}b_{j}^{\dot{A}} $ is a mixed entangled state. Thus, $\varPsi^{a}_{\dot{a}}  $ can generate a pair of entangled spinors. However, how is quantum information exchanged between the two spin-1/2 fermions?\par The symmetric tensor $ \tilde{\varPsi}_{\alpha\beta}$ is the Lie derivative of the metric. It is the symmetric part of the spin-1 KG equation and the symmetric part of $\varPsi^{a}_{\dot{a}}  $. Given a diffeomorphism $ \phi: M\longrightarrow M $, $ \tilde{\varPsi}_{\alpha\beta}$ is constructed from the pullback $\phi_{t\ast}$ of the metric under the diffeomorphism group, Diff(M):
\begin{equation}\label{key}
	\tilde{\varPsi}_{\alpha\beta}= \lim_{t \to 0} \{\frac{\phi_{t\star}[g_{\alpha\beta}(\phi_{t}(p))]-g_{\alpha\beta}(p)}{t}\}
\end{equation}where $\phi_{t}  $ is the flow down the integral curves associated with the line element vector $ X^{\beta}=\dfrac{dx^{\beta}}{dt} $ and $ t $ defines a one-parameter family of diffeomorphisms. The Lorentz group is a subgroup of Diff(M) so the pullback of the metric is not restricted to the Lorentz group. The metric at a point $ p^{\prime} $ on the integral curve associated with the regular vector $ X^{\beta} $, far from a given point $ p $ on that curve, can be pulled back instantaneously to the neighbourhood of $ p $, or pushed forward from $ p $ with $(\phi_{t}^{-1})^{\ast}  $. Hence, the rate of change of the gravitational field along $ X^{\beta} $ is not limited to the speed of light, and could be instantaneous with the caveat that there may be a presently unknown upper bound to the speed that spacetime itself can transfer information. That does not conflict with the Lorentzian behaviour of gravitational waves, which carry energy; information is not energy and has no mass equivalent. Information from the  measurement of the spin of the distant fermion would be instantaneously transmitted to the fermion at point $p$ by the Lie derivative of the metric along the line element vector. The two fermions would have opposite spins due to the Pauli exclusion principle and the entanglement would be destroyed by the measurement. Although bosons obey different quantum statistics, a similar argument describes their entanglement.  \par That establishes the global nature of quantum theory and the concept of entanglement within a Lorentzian invariant formalism. The respective local/nonlocal characteristics of MGR and quantum field theory no longer present an insurmountable problem to unify the theories. \par  The KG equation for a spin-1 boson, and the KG equation for the outer product of a spin-1/2 fermion with its Hermitian conjugate, have antisymmetric and symmetric components. The antisymmetric tensor $ K_{\alpha\beta} $  represents the boson or a pair of spin-1/2 particles and satisfies
\begin{equation}\label{key}
	\nabla_{\mu}K_{\alpha\beta}+\nabla_{\beta}K_{\mu\alpha}+\nabla_{\alpha}K_{\beta\mu}=0
\end{equation}
from which the wave equation 
\begin{equation}\label{Km}
	\begin{split}
		\nabla_{\mu}\nabla^{\mu}K_{\alpha\beta}=-2k^{2}K_{\alpha\beta}-2K{^{\mu}}_{[\alpha}R_{\beta]\mu}-2K^{\mu\sigma}R_{\mu\alpha\sigma\beta}-2\nabla_{[\alpha}\nabla^{\mu}\tilde{\varPsi}_{\beta]\mu}
	\end{split}
\end{equation}is obtained using (\ref{P2}). From (\ref{ME}) and (\ref{GRKG})
\begin{equation}\label{PsiT}
	\frac{1}{2}\tilde{\varPsi}_{\alpha\beta}=\frac{8\pi G}{c^{4}}\tilde{T}_{\alpha\beta}-G_{\alpha\beta}-u^{\lambda}(u_{\alpha}\nabla_{\beta}X_{\lambda}+u_{\beta}\nabla_{\alpha}X_{\lambda}).
\end{equation}  Given a particular metric, solutions to (\ref{PsiT}) exist in a region of spacetime outside of matter according to $ \varPhi_{\alpha\beta}+G_{\alpha\beta}=0 $ as shown in \cite{18} and Appendix B. The components of the quantum vector are then apparent in terms of the metric variables, and equation (\ref{Km}) will be satisfied. It is well known \cite{17,25} that the Einstein equation and wave equations are deterministic so it is conjectured that equations (\ref{Km}) and (\ref{PsiT}) are deterministic.  \par 
The last term in (\ref{Km}), $-2\nabla_{[\alpha}\nabla^{\mu}\tilde{\varPsi}_{\beta]\mu}:=N_{\alpha\beta} $, is constructed from the Lie derivative of the metric along the quantum vector. That term involves three derivatives of the line element field. It is shown in (\ref{N133}) of appendix C from the solutions to (\ref{ME}) in a region outside of matter with the metric of appendix B, that $ N_{13}$ varies as $ \frac{1}{fr^{4}} $ where $ f\neq0 $ is the magnitude of $ X_{3} $. $N_{13} $ is sensitive to changes in the gravitational field at minute distances. In the microworld, it can guide the wave in response to local changes due to gravity. The gravitationally guided wave is similar in concept to the pilot wave of the de Broglie-Bohm theory \cite{20,21}. That theory is inherently nonlocal and deterministic. These characteristics are shared in (\ref{Km}), which depends on $ \tilde{\varPsi}_{\alpha\beta}$ and the wave behavior of $ K_{\alpha\beta} $ that represents the particle. However, the gravitationally guided wave does not suffer from the mystery of the quantum potential \cite{34} integral to pilot wave theory.  
\par In contrast to a pair of related spin-1/2 fermions, a single spin-1/2 fermion is described by the wave equation (\ref{KG12}), which does not involve the Lie derivative of the metric. The spin-1/2 fermion travels as a wave at all spacetime coordinates, but does not have the gravitational sensitivity that bosons or a pair of related spin-1/2 fermions intrinsically possess.
\par   When we undertake a measurement of a particle in the microworld by bombarding it with light quanta or other particles, chaos or destruction is inherent in the process of measurement. Despite that the measurement may constrain the wave behaviour of the particle, it does not permanently destroy that property of the particle when detected as a particle, and it does not suddenly become a wave when detected as a wave, because in both cases, the particle behaves as a wave at all spacetime coordinates according to the fundamental wave equation (\ref{KG}). The experiment can be designed to detect a wave or a particle, but the reality of the microworld is not determined by the experiment; reality exists in the quantum-world before and after a measurement is performed on an entity in it. That is contrary to the Copenhagen interpretation \cite{35} (and references therein) of quantum theory whereby the reality of the microworld is declared after a measurement is performed on a particle in it. The physical process of measuring a particle in the microworld requires the statistical approach of quantum theory to describe the chaotic event. However, as discussed in \cite{36}, a particle moves deterministically as a wave until a measurement is performed on the system or another chaotic event happens that involves the particle. Then, the particle "jumps" from one quantum state to another with a different probability. It then continues along a new deterministic trajectory. \par The complementarity principle \cite{37,38} mysteriously provides entities in the microworld with wave or particle characteristics. The wave-particle duality does not challenge the reality of the microworld. The duality of the microworld melds into the macroworld by allowing the experimenter to determine the wave or particle characteristics as designed in the experiment. However, both properties cannot be measured simultaneously due to the Heisenberg uncertainty principle. \par Quantum particles fundamentally move as a  wave in accordance with the KG equation (\ref{KG}). That the particle intrinsically moves as a wave removes the mystery in the complementarity principle. However, there remains the problem of simultaneously observing the wave-particle properties. Entanglement, another mystery of quantum theory, solves that problem. In the 2017 study \cite{39}, the deterministic wave-particle entanglement of two photons was achieved. More importantly, a single self-entangled photon has recently been observed to exhibit simultaneous wave and particle behaviours \cite{40}. That was the first experiment at the single particle level, which is required to test a quantum-mechanical entity acting as both a particle and a wave. It is also interesting that the surface plasmon polariton experiment \cite{41} observed the simultaneous behavior of light acting as both a wave and a stream of particles. 

\section{Conclusion}
The results in this article are obtained directly from the Lie derivative of a Lorentzian metric along a line element vector field. Every four-dimensional non-compact paracompact Hausdorff Lorentzian manifold $(M,g_{\alpha\beta}) $  admits a line element field, ($X^{\beta},-X^{\beta}$). MGR exploits that vector field with the introduction of a tensor that describes the energy-momentum of the gravitational field and completes General Relativity. That tensor is constructed from the Lie derivatives of the metric and a product of unit line element covectors along the line element vector. \par Particles with  spins 0,1,2,1/2 move as a wave at all spacetime coordinates in accordance with the multi-spin Klein-Gordon equation of quantum field theory in curved spacetime. The spin-1 KG equation is derived from the line element field and its covariant derivative; the spin-0 KG equation is constructed from the contraction of the latter with the metric and the KG equation of the outer product of a spin-1/2 Dirac spinor and its Hermitian
conjugate yields twice the spin-1 KG equation. Thus, the line element vector field plays the role of a fundamental quantum vector field.\par Symmetrization of the KG equation in curved spacetime introduced a fundamental geometrical entity into quantum field theory that has been neglected. The symmetric part of the KG equation for spins 0,1, and the outer product of a spin-1/2 Dirac spinor and its Hermitian conjugate, is the Lie derivative of the metric along the line element vector. Thus, the Lie derivative of the metric along a line element vector field is the natural link of Modified General Relativity to quantum field theory in curved spacetime. \par From the structure of a Lorentzian metric in terms of a Riemannian metric and a product of unit line element covectors, it is evident that gravity and the spin-1 fields are unified in a Lorentzian spacetime provided that the magnitude of the spin-1 fields satisfies a homogeneous wave equation.
\par A spin-2 decomposition for $ \tilde{\chi}_{\alpha\beta} $ is obtained in terms of a collection of tensor fields independent of the Lovelock tensors. It follows that the Einstein equation is intrinsically hidden in $ \tilde{\chi}_{\alpha\beta} $. The metric does not appear as a field variable in the spin-2 KG equation. Spin-2 ``particles" do not couple to a non-zero energy-momentum tensor as force mediators for gravity, but occupy the vacuum. Thus, gravitons in a Lorentzian four-dimensional spacetime described by the covariant wave equation (\ref{KG}) do not exist. Unlike the other three known fundamental forces in nature, no particle exchange is required to explain the effective force of gravity; that is nicely done by the curvature of spacetime. That neither massless nor massive gravitons exist explains the hierarchy problem of particle physics. 
\par  Bosons with spins 0,1,2, a spin-1/2 fermion and a Hermitian pair of spin-1/2 fermions move as a wave at all spacetime coordinates in the microworld. Reality exists in the microworld before and after a measurement of an entity in the microworld. The process of measurement does not destroy the wave characteristics of the particle, or the particle nature of the wave.  \par The wave equation for the spin-1 bosons and the Hermitian pair of spin-1/2 fermions contains a term that is constructed from the Lie derivative of the metric along the quantum vector. That term is sensitive to changes in the local gravitational field that guide the wave in response to local changes due to gravity.  \par $\tilde{\varPsi}_{\alpha\beta}$ is constructed from the pullback of the metric under the diffeomorphism group. Since that group is not restricted to the Lorentz group, information can be pulled back or pushed forward along the quantum vector at a superluminal speed. That establishes the nonlocal behaviour of quantum theory and the concept of entanglement within a Lorentzian formalism. Thus, by using the line element field within MGR and introducing a symmetric tensor into quantum theory by symmetrizing the KG equation in curved spacetime, the respective local/nonlocal characteristics of MGR and quantum field theory no longer present an insurmountable problem to unify the theories.
\section*{Acknowledgements}
I would like to thank the anonymous referee who provided constructive comments that improved the quality of the manuscript.
%
\appendix
\section{Orthogonal Decomposition Theorem}
\begin{theorem}
	Orthogonal Decomposition Theorem (ODT): An arbitrary (0,2) symmetric tensor $ w_{\alpha\beta} $ in the symmetric cotangent bundle $ S^{2}T^{\ast}M $ on an n-dimensional Lorentzian manifold $ (M,g_{\alpha\beta}) $ with a torsionless and metric compatible connection can be orthogonally decomposed as
	\begin{equation}\label{ODT}
		w_{\alpha\beta}= v_{\alpha\beta}+ \varPhi_{\alpha\beta} 
	\end{equation} where $v_{\alpha\beta}  $ represents a linear sum of symmetric divergenceless (0,2) tensors and $\varPhi_{\alpha\beta}=\frac{1}{2}\pounds_{X}g_{\alpha\beta}+\pounds_{X}u_{\alpha}u_{\beta}$ where the unit vector $\bm{u}  $ is collinear with one of the pair of regular vectors in the  line element field $ (\bm{X},-\bm{X}) $.
\end{theorem}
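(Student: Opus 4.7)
The plan is to reduce the decomposition (\ref{ODT}) to the solvability of a linear second-order PDE for the vector field $X^\beta$ on the globally hyperbolic manifold $(M,g)$, and then to verify orthogonality by integration by parts.

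First I would observe that if such a decomposition exists with $v_{\alpha\beta}$ divergenceless, then $X^\beta$ is constrained by
\begin{equation*}
\nabla^{\alpha}\varPhi_{\alpha\beta}(X) \;=\; \nabla^{\alpha}w_{\alpha\beta}.
\end{equation*}
Substituting the explicit form (\ref{phiab}), this becomes a linear second-order operator on $X^\beta$ whose principal part is $\tfrac{1}{2}\Box X_{\beta} + \tfrac{1}{2}\nabla^{\alpha}\nabla_{\beta}X_{\alpha} + u^{\lambda}u_{\beta}\,\Box X_{\lambda}$ plus terms of lower differential order in which $u$, $\nabla u$, and curvature enter as coefficients. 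Hence the existence question reduces to showing that this equation admits a solution on all of $M$.

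Next I would invoke the globally hyperbolic structure: because $(M,g)$ admits a Cauchy surface $\Sigma$, the standard theorems on linear second-order hyperbolic systems (of Leray--Choquet-Bruhat type) supply a global solution $X^{\beta}$ for arbitrary smooth Cauchy data on $\Sigma$, provided the principal symbol is hyperbolic. Defining $v_{\alpha\beta}:=w_{\alpha\beta}-\varPhi_{\alpha\beta}(X)$ yields a tensor that is symmetric by construction and divergenceless by the equation just solved; its further expression as a sum of Lovelock and non-Lovelock divergenceless tensors, as used in the main text, is then a purely algebraic splitting on the fibres of $S^{2}T^{\ast}M$. Orthogonality in the natural pairing $\langle A,B\rangle=\int A^{\alpha\beta}B_{\alpha\beta}\sqrt{-g}\,d^{4}x$ follows by integrating $\langle v,\varPhi\rangle$ by parts, using symmetry of $v$, the identity $\nabla^{\alpha}v_{\alpha\beta}=0$, and (\ref{nuab}) to kill the bulk contributions, with boundary terms vanishing under compact support or suitable decay.

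The main obstacle will be justifying well-posedness of the $X^{\beta}$-equation. Two subtleties stand out. First, the operator $X\mapsto \pounds_{X}g_{\alpha\beta}+2\pounds_{X}(u_{\alpha}u_{\beta})$ has a nontrivial kernel, consisting of generalized Killing fields that preserve both $g$ and $u\otimes u$; hence $X^{\beta}$ is unique only modulo that kernel, which is harmless since the decomposition requires existence only. Second, the $u$-dependent terms perturb the bare wave operator on vectors, and one must verify that the full principal symbol at a covector $\xi$ remains nondegenerate of hyperbolic type; this reduces to checking the invertibility of a small symmetric matrix built from $g^{\alpha\beta}\xi_{\alpha}\xi_{\beta}$ and $u^{\alpha}\xi_{\alpha}$, which holds because $u^{\alpha}$ is a regular unit timelike vector in the paper's conventions.
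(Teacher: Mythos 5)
Your route is genuinely different from the paper's, and the difference matters. The paper never solves a Lorentzian PDE for $X^{\beta}$: it endows $M$ with an auxiliary Riemannian metric $g^{+}_{\alpha\beta}$, applies the Berger--Ebin theorem to obtain the \emph{elliptic} splitting $w_{\alpha\beta}=v_{\alpha\beta}+\tfrac{1}{2}\pounds_{\xi}g^{+}_{\alpha\beta}$ with $\nabla^{+\alpha}v_{\alpha\beta}=0$, rewrites $\pounds_{\xi}g^{+}$ in Lorentzian terms via $g^{+}_{\alpha\beta}=g_{\alpha\beta}+2u_{\alpha}u_{\beta}$, imposes the auxiliary constraint $v^{\alpha\lambda}\partial_{\beta}(u_{\alpha}u_{\lambda})=0$ so that Riemannian divergence-freeness of $v$ passes to the Lorentzian connection, and finally chooses $\xi$ collinear to $u$ so that it may be replaced by $X$ (the term $X^{\lambda}\nabla_{\lambda}(u_{\alpha}u_{\beta})$ being discarded in an affine parameterization). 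The whole architecture is designed to push the analytic existence question into the Riemannian setting where it is settled, precisely because the corresponding Lorentzian problem is not.

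That Lorentzian problem is where your proposal has a genuine gap. The equation $\nabla^{\alpha}\varPhi_{\alpha\beta}(X)=\nabla^{\alpha}w_{\alpha\beta}$ has principal symbol (suppressing the $u$-terms) $\sigma(\xi)_{\beta}{}^{\lambda}=\tfrac{1}{2}\bigl(g^{\mu\nu}\xi_{\mu}\xi_{\nu}\,\delta_{\beta}^{\lambda}+\xi_{\beta}\xi^{\lambda}\bigr)$, the Lorentzian analogue of the conformal-Killing/York operator. Its eigenvalues are $\tfrac{1}{2}|\xi|^{2}$ (multiplicity $n-1$) and $|\xi|^{2}$ (multiplicity $1$): the characteristics are repeated with unequal multiplicities on the null cone, the system is not strictly hyperbolic, and it is not obviously reducible to a symmetric-hyperbolic first-order system, so the Leray--Choquet-Bruhat existence theorems you invoke do not apply off the shelf. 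Your proposed check --- ``invertibility of a small symmetric matrix built from $g^{\alpha\beta}\xi_{\alpha}\xi_{\beta}$ and $u^{\alpha}\xi_{\alpha}$'' --- cannot be the right criterion, since every hyperbolic symbol is by definition singular on its characteristic cone; what must be verified is the \emph{type} of degeneracy there, and that is exactly the hard part that the Riemannian detour avoids. A secondary weak point: in your orthogonality argument, integrating $\tfrac{1}{2}\int v^{\alpha\beta}\pounds_{X}g_{\alpha\beta}$ by parts does vanish using $\nabla^{\alpha}v_{\alpha\beta}=0$, but the remaining piece $2\int v^{\alpha\beta}u_{\alpha}u^{\lambda}\nabla_{\beta}X_{\lambda}$ leaves a residue $-2\int v^{\alpha\beta}\nabla_{\beta}(u_{\alpha}u^{\lambda})X_{\lambda}$ that is a different contraction from (\ref{nuab}) and is not killed by it; you would need something like the paper's constraint $v^{\alpha\lambda}\partial_{\beta}(u_{\alpha}u_{\lambda})=0$ to dispose of it.
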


\begin{proof}
	Let the Lorentzian manifold $ (M,g_{\alpha\beta}) $ be non-compact paracompact and Hausdorff. A smooth regular line element field $(\bm{X},\bm{-X)}$ exists as does a unit vector $ \bm{u} $ collinear with one of the pair of line element vectors. Let M be endowed with a smooth Riemannian metric $ g^{+}_{\alpha\beta} $. The smooth Lorentzian metric $ g_{\alpha\beta} $ is constructed \cite{16,17,24,25} from $ g^{+}_{\alpha\beta} $ and the unit covectors $ u_{\alpha}$ and $ u_{\beta}$ : $g_{\alpha\beta}=g^{+}_{\alpha\beta}-2u_{\alpha}u_{\beta} $. Let $ w_{\alpha\beta} $ and $ v_{\alpha\beta} $ belong to $ S^{2}T^{\ast}M $, the cotangent bundle of symmetric $(0,2)$ tensors on M. In the compact neighborhood of a point $p$ in an open subset of $ S^{2}T^{\ast}M $ which contains $ g^{+}_{\alpha\beta} $, an arbitrary $ (0,2) $ symmetric tensor $ w_{\alpha\beta} $ can be orthogonally and uniquely decomposed by the Berger-Ebin theorem \cite{42} according to \begin{equation}\label{key}
		w_{\alpha\beta}=v_{\alpha\beta}+\frac{1}{2}\pounds_{\xi}g^{+}_{\alpha\beta}
	\end{equation} where $ \bm{\xi} $ is an arbitrary vector and $v_{\alpha\beta}  $ represents a linear sum of symmetric divergenceless (0,2) tensors: $ {\nabla^{+}}^{\alpha}v_{\alpha\beta}=0 $.\par The divergence of $ v_{\beta}^{\alpha} $ in the mixed tensor bundle can be written as $ \nabla_{\alpha}v_{\beta}^{\alpha}=\partial_{\alpha}v_{\beta}^{\alpha}+\frac{v_{\beta}^{\lambda}}{2g}\partial_{\lambda}g-\frac{1}{2}v^{\alpha\lambda}\partial_{\beta}g_{\alpha\lambda} $. Since the determinant of $ g_{\alpha\beta} $, $ g $, is related to that of $ g^{+}_{\alpha\beta} $ by $ g=-g^{+} $ 
	\begin{equation}\label{Dv}
		\begin{split}
			\nabla_{\alpha}v_{\beta}^{\alpha}-\nabla^{+}_{\alpha}v_{\beta}^{\alpha}=v^{\alpha\lambda}\partial_{\beta}(u_{\alpha}u_{\lambda}).
		\end{split}
	\end{equation} The left hand side of (\ref{Dv}) is a (0,1) tensor but the right hand side is not, which demands:
	\begin{equation}\label{}
		v^{\alpha\lambda}\partial_{\beta}(u_{\alpha}u_{\lambda})=0
	\end{equation}where $\partial_{\beta}u_{\alpha}\neq0 $. That guarantees $\nabla^{\alpha}v_{\alpha\beta}=0  $ because $\nabla^{+\alpha}v_{\alpha\beta}=0  $. Hence, 
	\begin{equation}\label{decomp}
		w_{\alpha\beta}=v_{\alpha\beta}+\frac{1}{2}\pounds_{\xi}g_{\alpha\beta}+\pounds_{\xi}u_{\alpha}u_{\beta}
	\end{equation} where $ \nabla^{\alpha}v_{\alpha\beta}=0 $. $ \xi^{\lambda} $ is an arbitrary vector which can be chosen to be collinear to $ u^{\lambda} $. Without loss of generality, $ \xi^{\lambda} $ can then be replaced by $ X^{\lambda} $. Using $ X^{\lambda}=fu^{\lambda} $ where $f\neq0  $ is the magnitude of $X^{\lambda}  $, the expression  $X^{\lambda}\nabla_{\lambda}(u_{\alpha}u_{\beta}) $ in the last term of (\ref{decomp}) then vanishes in an affine parameterization and
	\begin{equation}\label{udecomp}
		w_{\alpha\beta}=v_{\alpha\beta}+\varPhi_{\alpha\beta}
	\end{equation}where
	\begin{equation}\label{Phiab}
		\varPhi_{\alpha\beta}:=\frac{1}{2}(\nabla_{\alpha}X_{\beta}+\nabla_{\beta}X_{\alpha})+u^{\lambda}(u_{\alpha}\nabla_{\beta}X_{\lambda}+u_{\beta}\nabla_{\alpha}X_{\lambda}).
	\end{equation} The decomposition is orthogonal: $<v_{\alpha\beta},\varPhi_{\alpha\beta}>=0  $. \\
\end{proof}
\section{Spherical solution of MGR in the equatorial plane}
In a region of spacetime where there is no matter, $ \tilde{T}_{\alpha\beta}=0 $ and the field equations must satisfy
\begin{equation}\label{EF}
	G_{\alpha\beta}+\varPhi_{\alpha\beta}=0.
\end{equation} Spherical solutions to these nonlinear equations are now investigated in the equatorial plane. The metric has the form
\begin{equation}\label{g}
	ds^{2}=-e^\nu c^{2}dt^{2}+e^{\lambda}dr^{2}+r^{2}(d\theta^{2}+sin^{2}\theta d\varphi^{2})
\end{equation} where $ \nu $ and $ \lambda $ are functions of t, r and $\varphi$ with a fixed $ \theta $ and vanishing derivatives with respect to $ \theta $. The non-zero connection coefficients (Christoffel symbols) are:
\begin{flalign*}
	&\Gamma^{0}_{00}=\frac{1}{2}\partial_{0}\nu,\Gamma^{0}_{01}=\frac{1}{2}\partial_{1}\nu,\Gamma^{0}_{03}=\frac{1}{2}\partial_{3}\nu,
	\Gamma^{0}_{11}=\frac{1}{2}\partial_{0}\lambda e^{\lambda-\nu},&\\
	&\Gamma^{1}_{00}=\frac{1}{2}\partial_{1}\nu e^{\nu-\lambda},\Gamma^{1}_{01}=\frac{1}{2}\partial_{0}\lambda, 
	\Gamma^{1}_{11}=\frac{1}{2}\partial_{1}\lambda, \Gamma^{1}_{13}=\frac{1}{2}\partial_{3}\lambda,\Gamma^{1}_{22}=-re^{-\lambda}, \Gamma^{1}_{33}=-r\sin^{2}\theta e^{-\lambda},&\\
	&\Gamma^{2}_{12}=\frac{1}{r},\Gamma^{2}_{33}=-\sin\theta\cos\theta,\Gamma^{3}_{00}=\frac{e^{\nu}\partial_{3}\nu}{2r^{2}\sin^{2}\theta}, \Gamma^{3}_{11}=-\frac{e^{\lambda}}{2r^{2}\sin^{2}\theta}\partial_{3}\lambda,\Gamma^{3}_{13}=\frac{1}{r},\Gamma^{3}_{23}=\cot\theta.&
\end{flalign*}
The unit vectors $ u^{\alpha} $ satisfy 
\begin{equation}\label{uu1}
	u^{\alpha}u_{\alpha}=-1.
\end{equation} To study this highly nonlinear set of equations given by (\ref{EF}) with the property (\ref{uu1}), $ u_{2} $ is chosen to vanish. This requires
\begin{equation}\label{X2}
	X_{2}=0 
\end{equation} 
because $ u_{\alpha} $ is collinear with $ X_{\alpha}$.  Static solutions to (\ref{EF}) are sought which require the components of the line element field to satisfy 
\begin{equation}\label{xcomp}
	\partial_{0}X_{\alpha}=0,
\end{equation} and from the metric,
\begin{equation}\label{dlambdanu}
	\partial_{0}\lambda=0,\enspace \partial_{0}\nu=0. 
\end{equation}From (\ref{umu}), $ u_{0}=0 $ and therefore $ X_{0}=0 $. $X_{1} $ and $ X_{3} $ are non-zero.\par The components of $ \varPhi_{\alpha\beta} $ to be considered are then: 
\begin{equation}\label{key}
	\varPhi_{00}=-\frac{1}{2}e^{\nu-\lambda}\nu^{\prime} X_{1}-\frac{e^{\nu}}{2r^{2}\sin^{2}\theta}\partial_{3}\nu X_{3},
\end{equation}
\begin{equation}\label{key}
	\varPhi_{11}=(1+2u_{1}u^{1} )({X_{1}}^{\prime}-\frac{1}{2}\lambda^{\prime}X_{1}+\frac{e^{\lambda}}{2r^{2}\sin^{2}\theta}\partial_{3}\lambda X_{3}),
\end{equation} 
\begin{equation}\label{key}
	\varPhi_{22}=re^{-\lambda} X_{1},  
\end{equation} 
\begin{equation}\label{key}
	\varPhi_{33}=(1+2u_{3}u^{3})(\partial_{3}X_{3}+r \sin^{2}\theta e^{-\lambda}X_{1}),
\end{equation} the Ricci scalar, which from (\ref{EF}) equals $ \Phi $, is
\begin{equation}\label{R}
	\begin{split}
		R=e^{-\lambda}(-\nu^{\prime\prime}-\frac{1}{2}{\nu^{\prime}}^{2}+\frac{1}{2}\lambda^{\prime}\nu^{\prime}-\frac{2}{r}\nu^{\prime}+\frac{2}{r}\lambda^{\prime}-\frac{2}{r^{2}})+\frac{2}{r^{2}}-\frac{1}{2r^{2}\sin^{2}\theta}[2\partial_{3}\partial_{3}\nu+2\partial_{3}\partial_{3}\lambda+(\partial_{3}\lambda)^{2}\\+(\partial_{3}\nu)^{2}+\partial_{3}\lambda\partial_{3}\nu],
	\end{split}
\end{equation}
and the corresponding components of the Einstein tensor are:
\begin{equation}\label{key}
	\begin{split}
		G_{00}=\frac{1}{r^{2}}e^{\nu-\lambda}(r\lambda^{\prime}-1+e^{\lambda})+\frac{e^{\nu}}{2r^{2}\sin^{2}\theta}[-\partial_{3}\partial_{3}\lambda-\frac{1}{2} (\partial_{3}\lambda)^{2}],
	\end{split}
\end{equation}
\begin{equation}\label{key}
	\begin{split}
		G_{11}=\frac{1}{r^{2}}(1+r\nu^{\prime}-e^{\lambda})+\frac{e^{\lambda}}{2r^{2}\sin^{2}\theta}[\partial_{3}\partial_{3}\nu+\frac{1}{2}({\partial_{3}\nu})^{2}],
	\end{split}
\end{equation}
\begin{equation}\label{key}
	\begin{split}
		G_{22}=\frac{r^{2}e^{-\lambda}}{2}[\nu^{\prime\prime}+\frac{1}{r}\nu^{\prime}-\frac{1}{r}\lambda^{\prime}-\frac{1}{2}\lambda^{\prime}\nu^{\prime}+\frac{1}{2}(\nu^{\prime})^{2}]+\frac{1}{2\sin^{2}\theta}[\partial_{3}\partial_{3}\nu+\partial_{3}\partial_{3}\lambda+\frac{1}{2}({\partial_{3}\lambda})^{2} \\+\frac{1}{2}({\partial_{3}\nu})^{2}+\frac{1}{2}\partial_{3}\nu\partial_{3}\lambda],
	\end{split}
\end{equation}
\begin{equation}\label{key}
	\begin{split}
		G_{33}=\frac{r^{2}\sin^{2}\theta e^{-\lambda}}{2}[-\frac{\lambda^{\prime}}{r}+\frac{\nu^{\prime}}{r}+\nu^{\prime \prime}+\frac{1}{2}{(\nu^{\prime}})^{2}-\frac{1}{2}\lambda^{\prime}\nu^{\prime}]+\frac{1}{2}[\partial_{3}\partial_{3}\nu+\partial_{3}\partial_{3}\lambda+\frac{1}{2}({\partial_{3}\lambda})^{2} \\+\frac{1}{2}({\partial_{3}\nu})^{2}+\frac{1}{2}\partial_{3}\nu\partial_{3}\lambda]
	\end{split}
\end{equation} where the prime denotes $ \partial_{1} $. \par These equations are greatly simplified by setting
\begin{equation}\label{lambdanu}
	\nu=-\lambda.
\end{equation}
Thus, a class of static spherical solutions to (\ref{EF}) are sought with the restrictions (\ref{X2}),(\ref{xcomp}),(\ref{dlambdanu}) and (\ref{lambdanu}). \par Since $ e^{\lambda-\nu}(\varPhi_{00}+G_{00})+\varPhi_{11}+G_{11}=0 $,
\begin{equation}\label{phiG0011}
	\begin{split}
		-\frac{e^{\lambda}}{r^{2}\sin^{2}\theta}\partial_{3}\partial_{3}\lambda+\frac{\lambda^{\prime}}{2}X_{1}+\frac{e^{\lambda}}{2r^{2}\sin^{2}\theta}\partial_{3}\lambda X_{3}\\+(1+2u_{1}u^{1})(X_{1}^{\prime}-\frac{\lambda^{\prime}}{2}X_{1}+\frac{e^{\lambda}}{2r^{2}\sin^{2}\theta}\partial_{3}\lambda X_{3})=0.
	\end{split}
\end{equation} which can be written as 
\begin{equation}\label{C1}
	-X_{1}^{\prime}+\lambda^{\prime}X_{1}+u_{3}u^{3}(\lambda^{\prime}X_{1}-2X_{1}^{\prime})+\frac{e^{\lambda}}{r^{2}\sin^{2}\theta}(-\partial_{3}\partial_{3}\lambda-u_{3}u^{3}X_{3}\partial_{3}\lambda)=0	
\end{equation}using $ u_{1}u^{1}=-u_{3}u^{3}-1 $.

\par  $G_{33}+\varPhi_{33}-\sin^{2}\theta (G_{22}+\varPhi_{22})=0 $ in the interval $ 0<\theta<\pi $ gives 
\begin{equation}\label{C2}
	(1+2u_{3}u^{3})\frac{\partial_{3}X_{3}}{\sin^{2}\theta}=-2re^{-\lambda}X_{1}u_{3}u^{3}.
\end{equation}
From $G_{22}+\varPhi_{22}=0  $
\begin{equation}\label{DM}
	\begin{split}
		-\lambda^{\prime\prime}+\lambda^{\prime 2}-\frac{2}{r}\lambda^{\prime}+\frac{2e^{\lambda}}{r^{2}}(X_{1}re^{-\lambda}+\frac{1}{4\sin^{2}\theta}(\partial_{3}\lambda)^{2})=0.
	\end{split}
\end{equation}A radially symmetric solution to (\ref{DM}) is sought with $ \partial_{3}\lambda=0 $ and $ \partial_{3}\partial_{3}\lambda\neq0 $ using the ansatz
\begin{equation}\label{X1ser}
	X_{1}=e^{\lambda}(a_{0}-\frac{b}{r}) 	
\end{equation}   where $ a_{0} $ and $ b $ are arbitrary parameters. Equation (\ref{DM}) then has the exact solution 
\begin{equation}\label{wow}
	\begin{split}
		\lambda(r)=-\ln(\frac{c_{1}}{r}+c_{2}-a_{0}r+2bln(r)),\enspace 0<r<\infty
	\end{split}
\end{equation}
where $c_{1}$ and $ c_{2}  $ are arbitrary constants. Equation (\ref{wow}) represents the extended Schwarzschild solution with $ c_{1}=\frac{-2GM}{c^{2}} $ and $ c_{2}=1 $.
\section{Radial dependence of $ N_{\alpha\beta} $}The last term in (\ref{Km}) can be written as
\begin{multline}\label{Nab}
	N_{\alpha\beta}=\frac{1}{2}(-g)^{-\frac{3}{2}}[\partial_{\beta}g\partial_{\mu}(\sqrt{-g}\tilde{\varPsi}^{\mu}_{\alpha})-\partial_{\alpha}g\partial_{\mu}(\sqrt{-g}\tilde{\varPsi}^{\mu}_{\beta})]+\frac{1}{\sqrt{-g}}[\partial_{\beta}\partial_{\mu}(\sqrt{-g}\tilde{\varPsi}^{\mu}_{\alpha})\\-\partial_{\alpha}\partial_{\mu}(\sqrt{-g}\tilde{\varPsi}^{\mu}_{\beta})]+\frac{1}{2}[\partial_{\beta}g_{\mu\lambda}\partial_{\alpha}\tilde{\varPsi}^{\mu\lambda}-\partial_{\alpha}g_{\mu\lambda}\partial_{\beta}\tilde{\varPsi}^{\mu\lambda}].
\end{multline}In the metric of Appendix A:\\
$ g^{00}=-e^{\lambda} $, $g^{11}=e^{-\lambda}$, $g^{22}=\frac{1}{r^{2}}$, $ g^{33}=\frac{1}{r^{2}\sin^{2}\theta} $, $ \sqrt{-g}=r^{2}\sin\theta$, $\tilde{\varPsi}^{0}_{0}=-e^{-\lambda}\lambda^{\prime}X_{1}-\frac{\partial_{3}\lambda}{r^{2}\sin^{2}\theta} $,\\ $\tilde{\varPsi}^{1}_{1}=e^{-\lambda}
(2X_{1}^{\prime}-\lambda^{\prime}X_{1})+\frac{X_{3}\partial_{3}\lambda}{r^{2}\sin^{2}\theta}  $, $\tilde{\varPsi}^{2}_{2}=\frac{2e^{-\lambda}X_{1}}{r}  $, $ \tilde{\varPsi}^{3}_{3}=\frac{2e^{-\lambda}X_{1}}{r}+\frac{2\partial_{3}X_{3}}{r^{2}\sin^{2}\theta}   $,\\ $\tilde{\varPsi}^{1}_{3}=e^{-\lambda}(\partial_{1}X_{3}-\frac{2X_{3}}{r}) $, $\tilde{\varPsi}^{3}_{1}=\frac{\partial_{1}X_{3}}{r^{2}\sin^{2}\theta}-\frac{2X_{3}}{r^{3}\sin^{2}\theta} $\\
$\tilde{\varPsi}^{00}=\lambda^{\prime}X_{1}+\frac{e^{\lambda}\partial_{3}\lambda}{r^{2}\sin^{2}\theta} $, $ \tilde{\varPsi}^{11}=e^{-2\lambda}(2X_{1}^{\prime}-\lambda^{\prime}X_{1})+\frac{e^{-\lambda}X_{3}\partial_{3}\lambda}{r^{2}\sin^{2}\theta} $, $\tilde{\varPsi}^{22}=\frac{2e^{-\lambda}X_{1}}{r^{3}}  $, $\tilde{\varPsi}^{33}=\frac{2e^{-\lambda}X_{1}}{r^{3}\sin^{2}\theta}+\frac{2\partial_{3}X_{3}}{r^{4}\sin^{4}\theta}  $.\\
\par The radial dependence of $ N_{13} $ is of particular interest. In the equatorial plane with $ \partial_{3}\lambda=0 $
\begin{equation}\label{N13}
	\begin{split}
		N_{13}=-\frac{4\partial_{3}\partial_{3}X_{3}}{r^{3}}+\frac{\partial_{1}\partial_{3}\partial_{3}X_{3}}{r^{2}}+\frac{e^{-\lambda}}{r^{3}}D
	\end{split}
\end{equation}where $ D $ represents a combination of dimensionless terms involving $ X_{3} $; its first and second derivatives; $ r $ and $ \lambda^{\prime} $.
From (\ref{X1ser}), equation (\ref{C2}) is equivalent to
\begin{equation}\label{key}
	\partial_{3}X_{3}=\frac{-2re^{-\lambda}X_{1}X_{3}^{2}}{f^{2}r^{2}+2X_{3}^{2}}
\end{equation}using $ X_{3}=fu_{3} $ where $ f\neq0 $ is the magnitude of $ X_{3} $. Assuming $ f $ is independent of $ \varphi $, \begin{equation}\label{key}
	X_{3}=\frac{2B\varphi+c_{4}\pm\sqrt{(2B\varphi+c_{4})^{2}+8A}}{4}
\end{equation}where $ A:=f^{2}r^{2} $, $B:=-re^{-\lambda} X_{1} $ and $ c_{4} $ is an arbitrary parameter. Setting  $c_{4}=0$ gives $ \partial_{3}\partial_{3}X_{3}=\pm\frac{AB^{2}}{(B^{2}\varphi^{2}+2A)^{3/2}} \neq0 $. There are no higher derivatives of $ X_{3} $ in $ \varphi $ so setting $ \varphi=0 $ gives the angular independent expression $ \partial_{3}\partial_{3}X_{3}=\pm\frac{B^{2}}{2\sqrt{2}fr} $. For small $ r $, $ B\simeq b $ and (\ref{N13}) has the structure for small $ r $
\begin{equation}\label{N133}
	N_{13}\simeq\mp \frac{2b^{2}}{\sqrt{2}fr^{4}}\mp\frac{b^{2}}{2\sqrt{2}r^{4}f}(\frac{rf^{\prime}}{f}+1)+O(r^{-3})
\end{equation}where the prime denotes $ \partial_{1} $.

\vspace{3mm}

\end{document}